\documentclass[11pts]{article}
\setlength{\textwidth}{16.5cm} \setlength{\textheight}{23cm}
\oddsidemargin=-0.2cm \topmargin=-1.5cm

%% Chen \usepackage{makeidx}  % allows for indexgeneration
\usepackage{graphicx,amsmath,amssymb}
 \usepackage{hyperref}
%\usepackage{color}
%\usepackage{tikz,pgfkeys}
%\usepackage{tkz-graph}
%\usetikzlibrary{arrows,shapes}
%\usetikzlibrary{decorations.markings}
%\usepackage[dvips]{graphicx}

%%%%%%%%%%%%%%%%%%%%%%%%%%%%%%% begin of macro %%%%%%%%%%%%%%%%%%%%%%%%%%
\newtheorem{theorem}{Theorem}[section]
\newtheorem{lemma}[theorem]{Lemma}
\newtheorem{corollary}[theorem]{Corollary}

\newtheorem{kernelrule}{Step}

\newcommand{\blackslug}{\penalty 1000\hbox{
    \vrule height 8pt width .4pt\hskip -.4pt
    \vbox{\hrule width 8pt height .4pt\vskip -.4pt
          \vskip 8pt
      \vskip -.4pt\hrule width 8pt height .4pt}
    \hskip -3.9pt
    \vrule height 8pt width .4pt}}

\newenvironment{proof}{$\;$\newline \noindent {\sc Proof.}$\;\;\;$\rm}{}
\newcommand{\qed}{\hspace*{\fill}\blackslug}

\def\boxit#1{\vbox{\hrule\hbox{\vrule\kern4pt
  \vbox{\kern1pt#1\kern1pt}
\kern2pt\vrule}\hrule}}

%%%%%%%%%%%%%%%%%%%%%%%%%%%%%%% end of macro %%%%%%%%%%%%%%%%%%%%%%%%%%%%%
\begin{document}

\def\boxit#1{\vbox{\hrule\hbox{\vrule\kern4pt
  \vbox{\kern1pt#1\kern1pt}
\kern2pt\vrule}\hrule}}

\title{ \bf Cluster Editing: \\Kernelization based on Edge Cuts
  \thanks{Supported in part by the US National Science Foundation
    under the Grants CCF-0830455 and CCF-0917288.  Work partially done
    when both authors were visiting Central South University,
    Changsha, China.}  }

\author{
  {\sc Yixin Cao} \hspace*{22mm} {\sc Jianer Chen}\\
  { Department of Computer Science and Engineering}\\
  Texas A\&M University\\
  {\tt \{yixin, chen\}@cse.tamu.edu}}
\maketitle

\begin{abstract}
  Kernelization algorithms for the {\sc cluster editing} problem have
  been a popular topic in the recent research in parameterized
  computation.  Thus far most kernelization algorithms for this
  problem are based on the concept of {\it critical cliques}. In this
  paper, we present new observations and new techniques for the study
  of kernelization algorithms for the {\sc cluster editing}
  problem. Our techniques are based on the study of the relationship
  between {\sc cluster editing} and graph edge-cuts. As an application,
  we present an ${\cal O}(n^2)$-time algorithm that constructs a $2k$
  kernel for the {\it weighted} version of the {\sc cluster editing}
  problem. Our result meets the best kernel size for the
  unweighted version for the {\sc cluster editing} problem, and
  significantly improves the previous best kernel of quadratic size
  for the weighted version of the problem.
\end{abstract}

%\addtolength{\baselineskip}{+1.0mm}
%\sloppy

\section{Introduction}

Errors are ubiquitous in most experiments, and we have to find out
the true information buried behind them, that is, to remove the
inconsistences in data of experiment results.  In most cases, we
want to make the data consistent with the least amount of
modifications, i.e., we assume the errors are not too much.  This is
an \emph{everyday} problem in real life.  Indeed, the problem has
been studied by researchers in different areas
\cite{bansal-04-correlation-clustering,shamir-04-graph-modification-problem}.
A graph theoretical formulation of the problem is called the {\sc
cluster editing} problem that seeks a collection of edge
insertion/deletion operations of minimum cost that transforms a
given graph into a union of disjoint cliques. The {\sc cluster
editing} problem has applications in many areas, including machine
learning \cite{bansal-04-correlation-clustering}, world wide web
\cite{dean-99-www}, data-minning \cite{berkhin-06-survey},
information retrieval \cite{hearst-96-ir}, and computational biology
\cite{chen-03-phylogenetic}.  The problem is also closely related to
another interesting and important problem in algorithmic research,
{\sc clustering aggregation} \cite{ailon-08-ranking-and-clustering},
which, given a set of clusterings on the same set of vertices, asks
for a single clustering that agrees as much as possible with the
input clusterings.

Let $G = (V, E)$ be an undirected graph, and let $V^2$ be the set of
all unordered pairs of vertices in $G$ (thus, for two vertices $v$ and
$w$, $\{v, w\}$ and $\{w, v\}$ will be regarded as the same pair).
% Let $\mathbb{Z}_{\geq 1}$ be the set of all integers larger than or
% equal to $1$.
Let $\pi: V^2 \mapsto \mathbb{N} \cup \{+\infty\}$ be a {\it weight
function}, where $\mathbb{N}$ is the set of positive integers.  The
{\it weight} of an edge $[v, w]$ in $G$ is defined to be $\pi(v,
w)$. If vertices $v$ and $w$ are not adjacent, and we add an edge
between $v$ and $w$, then we say that we {\it
  insert an edge $[v, w]$ of weight $\pi(v, w)$}.

The weighted {\sc cluster editing} problem is formally defined as
follows:
\begin{quote}
  (Weighted) {\sc cluster editing}: Given $(G, \pi, k)$, where $G =
  (V, E)$ is an undirected graph, $\pi: V^2 \mapsto \mathbb{N} \cup
  \{+\infty\}$ is a weight function, and $k$ is an integer, is it
  possible to transform $G$ into a union of disjoint cliques by edge
  deletions and/or edge insertions such that the weight sum of the
  inserted edges and deleted edges is bounded by $k$?
\end{quote}

The problem is NP-complete even in its unweighted version
\cite{shamir-04-graph-modification-problem}. Polynomial-time
approximation algorithms for the problem have been studied. The best
result is a randomized approximation algorithm of expected
approximation ratio $3$ by Ailon, Charikar, and Newman
\cite{ailon-08-ranking-and-clustering}, which was later derandomized
by van Zuylen and Williamson \cite{zuylen-09}.  The problem has also
been shown to be APX-hard by Charikar, Guruswami, and Wirth
\cite{charikar-04-approximate-clustering}.

Recently, some researchers have turned their attention to exact
solutions, and to the study of parameterized algorithms for the
problem. A closely related problem is to study {\it kernelization
  algorithms} for the problem, which, on an instance $(G, \pi, k)$ of
{\sc cluster editing}, produces an ``equivalent'' instance $(G', \pi,
k')$ such that $k' \leq f(k)$\footnote{$f(\cdot)$ is a computable
  function.} and that the {\it kernel size} (i.e., the number of
vertices in the graph $G'$) is small. For the unweighted version of
the problem (i.e., assuming that for each pair $v$ and $w$ of
vertices, $\pi(v, w) = 1$), Gramm et
al. \cite{gramm-kernel-cluster-editing} presented the first
parameterized algorithm running in time ${\cal O}(2.27^k + n^3)$ and a
kernelization algorithm that produces a kernel of ${\cal O}(k^2)$
vertices. This result was immediately improved by a successive
sequence of studies on kernelization algorithms that produce kernels
of size $24k$ \cite{fellows-kernel-cluster-editing}, of size $4k$
\cite{guo-kernel-cluster-editing} and of size $2k$
\cite{chen-kernel-cluster-editing}.  The $24k$ kernel was obtained via
{\it crown reduction}, while the later two results were both based on
the concept of simple series module ({\it critical clique}), which is
a restricted version of modular decomposition
\cite{CE-80-decomposition}.  Basically, these algorithms iteratively
construct the modular decomposition, find reducible simple series
modules and apply reduction rules on them, until there are no any
reducible modules found.

For the weighted version, to our best knowledge, the only non-trivial
result on kernelization is the quadratic kernel developed by B\"ocker
et al.~\cite{bocker-09-weighted-clustering}.

The main result of this paper is the following theorem:
\begin{theorem}\label{thm:2k}
  There is an ${\cal O}(n^2)$-time kernelization algorithm for the
  weighted {\sc cluster editing} problem that produces a kernel which
  contains at most $2k$ vertices.
\end{theorem}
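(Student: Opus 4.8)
The plan is to bound the size of a reduced instance by relating it, through a charging argument, to the edge cuts induced by the cliques of an optimal solution. Fix a yes-instance $(G,\pi,k)$ and let $\{C_1,\dots,C_t\}$ be the clique partition of $V$ produced by some optimal solution of cost $k^\ast\le k$. For each $C_i$ write $\iota_i$ for the total weight of the non-adjacent pairs inside $C_i$ (the insertions charged to $C_i$) and $\delta_i$ for the total weight of the edges leaving $C_i$, i.e.\ the weight of the edge cut $(C_i, V\sm C_i)$. Since every inserted edge lies inside exactly one $C_i$ while every deleted edge is counted in exactly two of the $\delta_i$'s, one gets $\sum_i \iota_i + \tfrac12\sum_i\delta_i = k^\ast \le k$, hence $\sum_i(\delta_i + 2\iota_i)\le 2k$. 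So it suffices to design reduction rules whose exhaustive application guarantees $|C_i|\le \delta_i + 2\iota_i$ for every $i$; summing then yields $|V|\le 2k$.

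Next I would set up the reduction rules. Besides the obvious rule deleting a connected component of $G$ that is already a clique (such a component never contributes to an optimal solution), the main rule is a cut-based merging rule: it identifies a vertex set $S$ for which isolating $S$ as a clique of its own is cheap relative to its size, say $\delta(S) + 2\,\mathrm{icc}(S) < |S|$, where $\delta(S)$ is the weight of the cut $(S, V\sm S)$ and $\mathrm{icc}(S)$ is the weight of the missing edges inside $S$; the rule then contracts $S$ to a single vertex, paying $\mathrm{icc}(S)$ and decreasing $k$ accordingly (this is the standard weighted vertex merge, which re-sums the weights to the remaining vertices and may also pay for and fix some incident edges). Correctness of this rule is the crux. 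I would prove, by an exchange argument, that whenever such an $S$ exists there is an optimal solution keeping all of $S$ inside one clique: starting from an arbitrary optimal solution $\Phi$ in which $S$ is split among cliques $D_1,\dots,D_m$ with $m\ge 2$, move all of $S$ into the $D_j$ capturing the largest weight of $S$ (or make $S$ its own clique), and show the net change of cost is non-positive, because the newly created cost is at most $\delta(S)+\mathrm{icc}(S)$ plus the re-attachment cost of $S\sm D_j$, whereas the saved cost includes all the editing $\Phi$ performed to separate the pieces of $S$ from one another, which the size/cut inequality makes large enough to dominate.

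Now the kernel bound. In an instance reduced under both rules, consider again the cliques $C_1,\dots,C_t$ of an optimal solution. If some $C_i$ violated $|C_i|\le\delta_i+2\iota_i$, then $C_i$ itself, or a suitably chosen closed neighborhood inside it, would satisfy the precondition $\delta(\cdot)+2\,\mathrm{icc}(\cdot)<|\cdot|$: because $\delta_i$ and $\iota_i$ are small while $|C_i|$ is large and all weights are positive integers, a large part of $C_i$ behaves like a module attached to the rest of $G$ by a light cut, so the merging rule still applies, contradicting reducedness. Hence $|C_i|\le\delta_i+2\iota_i$ for all $i$, and combining with the charging inequality gives $|V(G)|=\sum_i|C_i|\le\sum_i(\delta_i+2\iota_i)\le 2k$. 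Consequently, after exhaustively applying the rules, if the graph still has more than $2k$ vertices the instance must be a no-instance and we output a trivial negative one; otherwise we already hold a $2k$-vertex kernel.

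For the running time I would show that the sets $S$ needing a test for the merging rule can be restricted to a polynomial family (closed neighborhoods of single vertices, together with the pieces arising from previous contractions), that each test and each contraction costs $O(n)$ up to a logarithmic factor, and that $O(n)$ contractions suffice, giving total time ${\cal O}(n^2)$ once one is careful to update $\pi$ and $k$ consistently and to recompute only the affected cuts. I expect the main obstacle to be the correctness of the cut-based merging rule: making the exchange argument go through with the right threshold in the weighted setting is delicate, since moving $S$ into one clique perturbs several cliques of $\Phi$ simultaneously and the bookkeeping of gained versus lost weight — particularly the weight internal to $S$ and the re-attachment cost — must be controlled tightly; a close second obstacle is verifying that a solution-clique violating $|C_i|\le\delta_i+2\iota_i$ in a reduced instance really does expose a set meeting the rule's precondition.
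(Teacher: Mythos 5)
Your plan is essentially the paper's: the reducibility test $\delta(S)+2\,\mathrm{icc}(S)<|S|$ is, when restricted to $S=N[v]$, exactly the paper's condition $\gamma(N[v])+2\delta(v)<|N[v]|$ (in the paper's notation $\delta(v)$ is the internal anti-edge weight, $\gamma$ the cut weight), and the three-part reduction, followed by a half-charging of editing costs to the two endpoints, is the same engine. Your counting identity $\sum_i \iota_i + \tfrac12\sum_i\delta_i = k^\ast$ is precisely the paper's half-charging, and the target inequality $|C_i|\le \delta_i+2\iota_i$ is the right one. So the route is the same; the issues are in the two places you yourself flagged as delicate, and both are real gaps as written.

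First, the step ``if $|C_i|>\delta_i+2\iota_i$ in a reduced instance then some closed neighborhood inside $C_i$ is reducible'' is asserted but not argued, and the hand-wave about ``a large part of $C_i$ behaving like a module'' is not a proof. The correct (and short) argument is the one the paper uses implicitly: under half-charging, the cost landing on $C_i$ is exactly $\iota_i+\delta_i/2$. If this is strictly less than $|C_i|/2$, then not every vertex of $C_i$ can bear at least $1/2$, so some $v\in C_i$ is touched by no pair of $S$; for such $v$, $N[v]=C_i$ in $G$ (its neighborhood is unedited), whence $\rho(v)=2\iota_i+\delta_i<|N[v]|$ and the rule should have fired. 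Without this untouched-vertex observation there is no reason a violating $C_i$ must contain a reducible closed neighborhood, so you should make it explicit.

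Second, the correctness of the merging rule is treated as an ``exchange argument: move all of $S$ into the biggest $D_j$, net change $\le 0$.'' This is not how the paper proves it and I do not think it goes through as stated: moving $S$ into $D_j$ perturbs several cliques and the accounting of re-attachment cost is genuinely subtle. The paper instead sandwiches $\omega(G)$ between $\omega(G[\overline{N[v]}])+\rho(v)-\delta(v)$ from above (Corollary~\ref{lem:2cutting}) and $\omega(G[\overline{N[v]}])+\pi(P(X,Y))-\delta(v)$ from below (Lemma~\ref{lem:edge-cutting}), deduces $|X||Y|\le\rho(v)<|X|+|Y|$, hence $\min(|X|,|Y|)=1$, and then uses integrality of $\pi$ to force equality $\pi(P(X,Y))=\rho(v)$, which shows cutting $N[v]$ off is already optimal. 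You should replace the exchange sketch with an inequality chain of this kind; the integrality step is load-bearing and is exactly why the real-weighted version only gives $2.5k$. Finally, ``the standard weighted vertex merge'' is underspecified: as the paper's Steps~\ref{rule:multi-neighbor} and~\ref{rule:pendent} show, one must first delete the light cut edges and prove at most one outside neighbor can survive (else $\gamma(N[v])\ge|N[v]|$, contradiction), and only then merge to a degree-one vertex with a carefully chosen edge weight; without that, the contracted vertex can retain many heavy neighbors and the size bound no longer follows from the per-clique inequality.
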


Compared to all previous results, Theorem~\ref{thm:2k} is better not
only in kernel size and running time, but also more importantly in
conceptual simplicity.

%%  \begin{theorem}\label{thm:edge}
%%    There is an ${\cal O}(n^3)$ time reduction algorithm for weighted
%%    {\sc cluster editing} problem, after which is applied, the instance
%%    has either at most $2k$ vertices and ${\cal O}(k\sqrt{k})$ edges, or
%%    it has no solution $\leq k$.
%%  \end{theorem}

A more general version of weighted {\sc cluster editing} problem is
defined with real weights, that is, the weight function $\pi$ is
replaced by $\pi': V^2 \mapsto \mathbb{R}_{\geq 1} \cup \{+\infty\}$
where $\mathbb{R}_{\geq 1}$ is the set of all real numbers larger
than or equal to $1$, and correspondingly $k$ becomes a positive
real number.  Our result also works for this version, in the same
running time, and with only a small relaxation in the consant of
kernel size.

\paragraph{Our contribution.} We report the first linear vertex kernel
with very small constant, for the weighted version of the {\sc
cluster editing} problem. Our contribution to this research
includes:
\begin{enumerate}
\vspace{-2mm}
\item the cutting lemmas (some of them are not used for our
  kernelization algorithm) are of potential use for future work
  on kernelizations and algorithms;
\vspace{-2mm}
\item both the idea and the process are very simple with efficient
  implementations that run in time ${\cal O}(n^2)$.  Indeed, we
  use only a single reduction rule, which works
  for both weighted and unweighted versions;
\vspace{-2mm}
\item the reduction processes to obtain the above results are
  independent of $k$, and therefore are more general and applicable.
%%  \item although our approach, when restricted to the unweighted
%%    version, can also be explained using the \emph{simple series module}
%%    (discussed in Section~\ref{sec:discussion}), we avoid this concept
%%    with two benefits:
%%    \begin{enumerate}
%%    \item allowing our approach to apply for the weighted version,
%%      because one major weakness of modular decomposition is its
%%      inability in handling weights; and
%%    \item avoiding the complicated modular decomposition, although there
%%      has been a long list of linear algorithms for finding modular
%%      decomposition for an undirected graph (see a comprehensive survey
%%      by de Montgolfier \cite{de-Montgolfier-thesis}), it is very
%%      time-comsuming because the big constant hidden behind the big-O
%%      \cite{tedder-08}, not to mention that the modular decopmosition
%%      needs to be re-constructed after each iteration;
%%    \end{enumerate}
\end{enumerate}

\section{Cutting Lemmas}\label{sec:pre}

In this paper, graphs are always undirected and simple. A graph is a
{\it complete graph} if each pair of vertices are connected by an
edge. A {\it clique} in a graph $G$ is a subgraph $G'$ of $G$ such
that $G'$ is a complete graph. By definition, a clique of $h$
vertices contains ${h \choose 2} = h(h-1)/2$ edges. If two vertices
$v$ and $w$ are not adjacent, then we say that the edge $[v, w]$ is
{\it missing}, and call the pair $\{v, w\}$ an {\it anti-edge}.  The
total number of anti-edges in a graph of $n$ vertices is $n(n-1)/2 -
|E(G)|$. The subgraph of the graph $G$ induced by a vertex subset
$X$ is denoted by $G[X]$.

Let $G = (V, E)$ be a graph, and let $S \subseteq V^2$. Denote by $G
\triangle S$ the graph obtained from $G$ as follows: for each pair
$\{v, w\}$ in $S$, if $[v, w]$ is an edge in $G$, then remove the
edge $[v, w]$ in the graph, while if $\{v, w\}$ is an anti-edge,
then insert the edge $[v, w]$ into the graph. A set $S \subseteq
V^2$ is a {\it solution} to a graph $G = (V, E)$ if the graph $G
\triangle S$ is a union of disjoint cliques.

For an instance $(G, \pi, k)$ of {\sc cluster editing}, where $G =
(V, E)$, the {\it weight} of a set $S \subseteq V^2$ is defined as
$\pi(S) = \sum_{\{v, w\} \in S} \pi(v, w)$. Similarly, for a set
$E'$ of edges in $G$, the {\it weight} of $E'$ is $\pi(E') =
\sum_{[v, w] \in E'} \pi(v, w)$. Therefore, the instance $(G, \pi,
k)$ asks if there is a solution to $G$ whose weight is bounded by
$k$.

For a vertex $v$, denote by $N(v)$ the set of neighbors of $v$, and
let $N[v] = N(v) \cup \{v\}$.  For a vertex set $X$, $N[X] =
\bigcup_{v \in X} N[v]$, and $N(X) = N[X] \backslash X$. For the
vertex set $X$, define $\overline{X} = V \backslash X$. For two
vertex subsets $X$ and $Y$, denote by $E(X, Y)$ the set of edges
that has one end in $X$ and the other end in $Y$. For a vertex
subset $X$, the edge set $E(X, \overline{X})$ is called the {\it cut
of $X$}. The total cost of the cut of $X$ is denoted by $\gamma(X) =
\pi( E(X, \overline{X}) )$. Obviously, $\gamma(X) =
\gamma(\overline{X})$. For an instance $(G, \pi, k)$ of the {\sc
cluster editing} problem, denote by $\omega(G)$ the weight of an
optimal (i.e., minimum weighted) solution to the graph $G$.

Behind all of the following lemmas is a very simple observation: in
the objective graph $G \triangle S$ for any solution $S$ to the
graph $G$, each induced subgraph is also a union of disjoint
cliques. Therefore, a solution $S$ to the graph $G$ restricted to an
induced subgraph $G'$ of $G$ (i.e., the pairs of $S$ in which both
vertices are in $G'$) is also a solution to the subgraph $G'$. This
observation leads to the following {\it Cutting Lemma}.

\begin{lemma}\label{lem:cutting}
  Let ${\cal P} = \{V_1, V_2, \dots, V_p \}$ be a vertex partition of
  a graph $G$, and let $E_{\cal P}$ be the set of edges whose two
  ends belong to two different parts in $\cal P$. Then $\sum_{i=1}^p
  \omega(G[V_i]) \leq \omega(G) \leq \pi(E_{\cal P}) + \sum_{i=1}^p
  \omega(G[V_i])$.
\end{lemma}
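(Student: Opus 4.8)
The plan is to prove the two inequalities separately, each by a direct construction witnessing the bound.

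For the lower bound $\sum_{i=1}^p \omega(G[V_i]) \le \omega(G)$, I would start from an optimal solution $S$ to $G$, so that $\pi(S) = \omega(G)$ and $G \triangle S$ is a disjoint union of cliques. The key observation already highlighted in the excerpt is that restricting $S$ to any induced subgraph $G[V_i]$ — i.e., taking $S_i = \{\{v,w\} \in S : v,w \in V_i\}$ — yields a solution to $G[V_i]$, since $(G \triangle S)[V_i] = G[V_i] \triangle S_i$ is an induced subgraph of a disjoint union of cliques and hence itself a disjoint union of cliques. Therefore $\omega(G[V_i]) \le \pi(S_i)$ for each $i$. The sets $S_1, \dots, S_p$ are pairwise disjoint (a pair $\{v,w\}$ lies in at most one $S_i$, namely when both endpoints fall in the same part), so $\sum_{i=1}^p \pi(S_i) \le \pi(S) = \omega(G)$, and combining gives $\sum_{i=1}^p \omega(G[V_i]) \le \omega(G)$.

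For the upper bound $\omega(G) \le \pi(E_{\cal P}) + \sum_{i=1}^p \omega(G[V_i])$, I would build an explicit solution to $G$ and bound its weight. Pick an optimal solution $S_i$ to each $G[V_i]$, and let $S = E_{\cal P} \cup \bigcup_{i=1}^p S_i$, where $E_{\cal P}$ is understood as the corresponding set of vertex pairs. Applying $S$ to $G$ first deletes every cross-part edge (all of $E_{\cal P}$), which disconnects the parts from one another, and then within each part performs the edits $S_i$; the resulting graph is the disjoint union of the graphs $G[V_i] \triangle S_i$, each of which is a disjoint union of cliques by choice of $S_i$. Hence $S$ is a solution to $G$, so $\omega(G) \le \pi(S) \le \pi(E_{\cal P}) + \sum_{i=1}^p \pi(S_i) = \pi(E_{\cal P}) + \sum_{i=1}^p \omega(G[V_i])$, where the middle step uses subadditivity of $\pi$ over the (possibly overlapping only through the trivial accounting) union.

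The only mild subtlety — and the point I would state carefully rather than gloss — is the bookkeeping that $S = E_{\cal P} \cup \bigcup_i S_i$ is a legitimate subset of $V^2$ whose symmetric-difference operation behaves as claimed: the pairs in $E_{\cal P}$ are exactly the inter-part pairs that are edges of $G$, the pairs in $S_i$ are intra-part pairs, so the union is over genuinely disjoint families of pairs and $\pi$ is additive on it; moreover no pair in $\bigcup_i S_i$ ever reintroduces a cross-part adjacency, so the components of $G \triangle S$ cannot merge across parts. I do not anticipate a real obstacle here — both directions are short — but if one wanted to be maximally careful, the step deserving attention is verifying that an induced subgraph of a disjoint union of cliques is again a disjoint union of cliques (immediate, since it is $P_3$-free), which underpins the lower bound.
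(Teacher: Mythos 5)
Your proof is correct and follows essentially the same route as the paper's: restricting an optimal solution to each part for the lower bound, and combining optimal per-part solutions with deletion of $E_{\cal P}$ for the upper bound. The extra bookkeeping you add (disjointness of the $S_i$, $P_3$-freeness of induced subgraphs of cluster graphs) is sound and merely makes explicit what the paper leaves implicit.
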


\begin{proof}
  Let $S$ be an optimal solution to the graph $G$. For $1 \leq i \leq
  p$, let $S_i$ be the subset of $S$ such that each pair in $S_i$ has
  both its vertices in $V_i$. As noted above, the set $S_i$ is a
  solution to the graph $G[V_i]$, which imples $\omega(G[V_i]) \leq
  \pi(S_i)$. Thus,
 \[ \sum_{i=1}^p \omega(G[V_i]) \leq \sum_{i=1}^p \pi(S_i)
   \leq \pi(S) = \omega(G). \]

On the other hand, if we remove all edges in $E_{\cal P}$, and for
each $i$, apply an optimal solution $S_i'$ to the induced subgraph
$G[V_i]$, we will obviously end up with a union of disjoint cliques.
Therefore, these operations make a solution to the graph $G$ whose
weight is $\pi(E_{\cal P}) + \sum_{i=1}^p \pi(S_i') = \pi(E_{\cal
P}) + \sum_{i=1}^p \omega(G[V_i])$. This gives immediately
$\omega(G) \leq \pi(E_{\cal P}) + \sum_{i=1}^p \omega(G[V_i])$.
\qed
\end{proof}
\medskip

Lemma~\ref{lem:cutting} directly implies the following corollaries.
First, if there is no edge between two different parts in the vertex
partition ${\cal P}$, then Lemma~\ref{lem:cutting} gives

\begin{corollary}\label{lem:components}
Let $G$ be a graph with connected components $G_1$, $\dots$, $G_p$,
then $\omega(G) = \sum_{i=1}^p \omega(G_i)$, and every optimal
solution to the graph $G$ is a union of optimal solutions to the
subgraphs $G_1$, $\ldots$, $G_p$.
\end{corollary}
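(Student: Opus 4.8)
The plan is to apply Lemma~\ref{lem:cutting} to the vertex partition $\mathcal{P} = \{V(G_1), V(G_2), \dots, V(G_p)\}$ given by the connected components of $G$. Since $G_1, \dots, G_p$ are the connected components, there are no edges joining vertices in distinct parts, so the edge set $E_{\mathcal{P}}$ is empty and hence $\pi(E_{\mathcal{P}}) = 0$. The two inequalities of Lemma~\ref{lem:cutting} then collapse to $\sum_{i=1}^p \omega(G[V(G_i)]) \leq \omega(G) \leq \sum_{i=1}^p \omega(G[V(G_i)])$, i.e. $\omega(G) = \sum_{i=1}^p \omega(G_i)$ (noting $G[V(G_i)] = G_i$). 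This gives the first claim immediately.

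For the structural statement about optimal solutions, I would argue in both directions. Given optimal solutions $S_i'$ to each $G_i$, the second half of the proof of Lemma~\ref{lem:cutting} already shows that $\bigcup_i S_i'$ is a solution to $G$ of weight $\sum_i \omega(G_i) = \omega(G)$, hence optimal; so every union of optimal solutions to the components is an optimal solution to $G$. Conversely, let $S$ be any optimal solution to $G$, and write $S_i$ for the pairs of $S$ with both endpoints in $V(G_i)$. As in the first half of the Cutting Lemma's proof, each $S_i$ is a solution to $G_i$, so $\pi(S_i) \geq \omega(G_i)$. The only point needing a word is that $S$ contains no pair with endpoints in two different components: such a pair $\{v,w\}$ with $v \in V(G_i)$, $w \in V(G_j)$, $i \neq j$, must be an anti-edge (there are no cross-component edges), and inserting it would merge the cliques containing $v$ and $w$ in $G \triangle S$ into one component spanning both $V(G_i)$ and $V(G_j)$ — but deleting such a pair from $S$ still leaves a union of disjoint cliques and strictly lowers the weight (weights are positive), contradicting optimality. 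Hence $S = \bigcup_i S_i$ with $\pi(S) = \sum_i \pi(S_i) \geq \sum_i \omega(G_i) = \omega(G) = \pi(S)$, forcing $\pi(S_i) = \omega(G_i)$ for every $i$; thus $S$ is a union of optimal solutions to the components.

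The only mild subtlety — and the one place a careless proof could slip — is the claim that an optimal $S$ never touches a cross-component pair; everything else is a direct specialization of Lemma~\ref{lem:cutting}. One should also keep in mind that the definition of $\triangle$ means adding a cross-component anti-edge genuinely connects the two sides, so the "delete that pair from $S$" surgery is valid and does reduce weight. With that observation in hand the corollary follows with essentially no further computation.

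\begin{proof}
  Apply Lemma~\ref{lem:cutting} with $\mathcal{P} = \{V(G_1), \dots, V(G_p)\}$. Since the $G_i$ are connected components, no edge of $G$ joins two different parts, so $E_{\mathcal{P}} = \emptyset$ and $\pi(E_{\mathcal{P}}) = 0$. Lemma~\ref{lem:cutting} then gives $\sum_{i=1}^p \omega(G_i) \leq \omega(G) \leq \sum_{i=1}^p \omega(G_i)$, i.e. $\omega(G) = \sum_{i=1}^p \omega(G_i)$.

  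If $S_i'$ is an optimal solution to $G_i$ for each $i$, then, exactly as in the second half of the proof of Lemma~\ref{lem:cutting} (with $\pi(E_{\mathcal{P}}) = 0$), the set $\bigcup_{i=1}^p S_i'$ is a solution to $G$ of weight $\sum_{i=1}^p \omega(G_i) = \omega(G)$, hence an optimal solution to $G$.

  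Conversely, let $S$ be an optimal solution to $G$, and for each $i$ let $S_i$ be the set of pairs in $S$ with both endpoints in $V(G_i)$. First, $S$ contains no pair $\{v, w\}$ with $v \in V(G_i)$ and $w \in V(G_j)$ for $i \neq j$: such a pair is necessarily an anti-edge, so it lies in $S$ only if $S$ inserts the edge $[v, w]$; but then removing $\{v, w\}$ from $S$ still yields a union of disjoint cliques (the component of $G \triangle S$ containing $v$ and $w$ splits back into subsets of $V(G_i)$ and of $V(G_j)$, each already a clique) while strictly decreasing the weight, contradicting optimality. Hence $S = \bigcup_{i=1}^p S_i$. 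By the argument in the first half of the proof of Lemma~\ref{lem:cutting}, each $S_i$ is a solution to $G_i$, so $\pi(S_i) \geq \omega(G_i)$. Then
  \[ \omega(G) = \pi(S) = \sum_{i=1}^p \pi(S_i) \geq \sum_{i=1}^p \omega(G_i) = \omega(G), \]
  so equality holds throughout, forcing $\pi(S_i) = \omega(G_i)$ for every $i$. Thus $S$ is a union of optimal solutions to $G_1, \dots, G_p$.
  \qed
\end{proof}
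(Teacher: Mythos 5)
Your first part — applying Lemma~\ref{lem:cutting} to the partition by components and noting $E_{\mathcal{P}}=\emptyset$ — is exactly what the paper intends (the paper gives no explicit proof, just notes the corollary is a direct consequence). The gap is in your argument that an optimal $S$ contains no cross-component pair. You claim that removing a single cross-component pair $\{v,w\}$ from $S$ ``still yields a union of disjoint cliques,'' because the clique containing $v,w$ ``splits back'' into pieces lying in $V(G_i)$ and $V(G_j)$. That is false in general: if $S$ inserts several cross-component edges into the same clique, deleting one leaves that clique missing an edge. Concretely, take three singleton components $\{a\},\{b\},\{c\}$ and $S=\{\{a,b\},\{b,c\},\{a,c\}\}$; removing $\{a,b\}$ leaves a path, not a cluster graph. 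So this step, which you yourself flag as the one subtle point, does not go through as written.

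The fix is either to remove \emph{all} cross-component pairs at once — then the result is $\bigcup_i S_i$, which is a solution because each $S_i$ is one and the components are disjoint — or, better, to skip the surgery entirely: you already have $\pi(S)\ge\sum_i\pi(S_i)\ge\sum_i\omega(G_i)=\omega(G)=\pi(S)$, where the first inequality holds unconditionally because each pair in $S$ belongs to at most one $S_i$ and weights are positive. Equality throughout forces $\pi(S)=\sum_i\pi(S_i)$ (so $S$ has no cross pair, since any such pair has positive weight and is counted in no $S_i$) \emph{and} $\pi(S_i)=\omega(G_i)$ for every $i$. This derives the no-cross-pair property as a consequence of the weight chain you already wrote down, rather than proving it separately by an incorrect single-deletion argument.
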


When $p = 2$, i.e., the vertex partition is ${\cal P} = \{X,
\overline{X} \}$, the edge set $E_{\cal P}$ becomes the cut $E(X,
\overline{X})$, and $\pi(E(X, \overline{X})) = \gamma(X)$.
Lemma~\ref{lem:cutting} gives

\begin{corollary}\label{lem:2cutting}
Let $X \subseteq V$ be a vertex set, then $\omega(G[X]) +
\omega(G[\overline{X}]) \leq \omega(G) \leq \omega(G[X]) +
\omega(G[\overline{X}]) + \gamma(X) $.
\end{corollary}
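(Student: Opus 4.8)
The plan is to derive this statement as the special case $p = 2$ of Lemma~\ref{lem:cutting}. First I would apply that lemma to the vertex partition $\mathcal{P} = \{X, \overline{X}\}$. The only thing that needs to be checked is that, for this partition, the edge set $E_{\mathcal{P}}$ of edges whose two ends lie in different parts coincides with the cut $E(X, \overline{X})$ of $X$; this is immediate from the definitions, and therefore $\pi(E_{\mathcal{P}}) = \pi(E(X, \overline{X})) = \gamma(X)$. Substituting $p = 2$ into the conclusion of Lemma~\ref{lem:cutting} then yields exactly $\omega(G[X]) + \omega(G[\overline{X}]) \leq \omega(G) \leq \omega(G[X]) + \omega(G[\overline{X}]) + \gamma(X)$.

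If a self-contained argument were preferred, I would simply specialize the two halves of the proof of Lemma~\ref{lem:cutting}. For the lower bound, take an optimal solution $S$ to $G$ and split it into $S_X$ and $S_{\overline{X}}$, the sets of pairs of $S$ with both endpoints in $X$ and in $\overline{X}$, respectively; since every induced subgraph of a disjoint union of cliques is again a disjoint union of cliques, $S_X$ solves $G[X]$ and $S_{\overline{X}}$ solves $G[\overline{X}]$, so $\omega(G[X]) + \omega(G[\overline{X}]) \leq \pi(S_X) + \pi(S_{\overline{X}}) \leq \pi(S) = \omega(G)$. For the upper bound, delete every edge of the cut $E(X, \overline{X})$ and then apply an optimal solution to each of $G[X]$ and $G[\overline{X}]$; this separates $X$ from $\overline{X}$ and turns each side into a disjoint union of cliques, so the combined operation is a solution to $G$ of total weight $\gamma(X) + \omega(G[X]) + \omega(G[\overline{X}])$, bounding $\omega(G)$ from above.

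I do not expect any real obstacle: the corollary is a direct instantiation of the Cutting Lemma, and the only (trivial) point is recognizing that the crossing edges of the partition $\{X, \overline{X}\}$ are precisely the cut of $X$.
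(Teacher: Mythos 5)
Your proof is correct and follows exactly the paper's route: the corollary is obtained by specializing Lemma~\ref{lem:cutting} to the two-part partition $\{X,\overline{X}\}$ and observing that $E_{\mathcal{P}}=E(X,\overline{X})$, so $\pi(E_{\mathcal{P}})=\gamma(X)$. The self-contained version you sketch is just the proof of the Cutting Lemma restricted to $p=2$, so nothing substantively different is introduced.
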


\begin{corollary}\label{lem:no-more-than-cut}
Let $G$ be a graph, and let $S^*$ be an optimal solution to $G$. For
any subset $X$ of vertices in $G$, if we let $S^*(X, \overline{X})$
be the subset of pairs in which one vertex is in $X$ and the other
vertex is in $\overline{X}$, then $\pi(S^*(X, \overline{X})) \leq
\gamma(X)$.
\end{corollary}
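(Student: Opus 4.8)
The plan is to apply Corollary~\ref{lem:2cutting} together with the optimality of $S^*$ and exhibit a competing solution to $G$ that reuses $S^*$ outside the cut. First I would partition the pairs in $S^*$ according to the cut: write $S^* = S^*_X \cup S^*_{\overline{X}} \cup S^*(X,\overline{X})$, where $S^*_X$ (resp.\ $S^*_{\overline{X}}$) consists of the pairs of $S^*$ with both vertices in $X$ (resp.\ in $\overline{X}$), and $S^*(X,\overline{X})$ is the ``mixed'' part as defined in the statement. Since these three sets are disjoint, $\pi(S^*) = \pi(S^*_X) + \pi(S^*_{\overline{X}}) + \pi(S^*(X,\overline{X}))$, and $\pi(S^*) = \omega(G)$ because $S^*$ is optimal.

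Next I would build an alternative solution $S'$ to $G$ that avoids the mixed pairs. The natural candidate is to keep the restrictions of $S^*$ to each side and, in place of the mixed pairs, simply delete all the edges of the cut: concretely, set $S' = S^*_X \cup S^*_{\overline{X}} \cup E(X,\overline{X})$, viewing $E(X,\overline{X})$ as a set of pairs in $V^2$. As noted in the excerpt (and used in the proof of the Cutting Lemma), $S^*_X$ is a solution to $G[X]$ and $S^*_{\overline{X}}$ is a solution to $G[\overline{X}]$; after additionally removing every edge between $X$ and $\overline{X}$, the graph $G \triangle S'$ splits into the disjoint union of $G[X]\triangle S^*_X$ and $G[\overline{X}]\triangle S^*_{\overline{X}}$, each a disjoint union of cliques, hence $G\triangle S'$ is too. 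So $S'$ is a genuine solution to $G$, and $\pi(S') = \pi(S^*_X) + \pi(S^*_{\overline{X}}) + \pi(E(X,\overline{X})) = \pi(S^*_X) + \pi(S^*_{\overline{X}}) + \gamma(X)$.

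Finally I would invoke optimality: $\omega(G) \le \pi(S')$, i.e.\ $\pi(S^*_X) + \pi(S^*_{\overline{X}}) + \pi(S^*(X,\overline{X})) \le \pi(S^*_X) + \pi(S^*_{\overline{X}}) + \gamma(X)$, and cancelling the common terms gives $\pi(S^*(X,\overline{X})) \le \gamma(X)$, as claimed. The only mild subtlety — and the step I would be most careful about — is checking that $S'$ is well defined as a subset of $V^2$ and really is a solution: one must confirm that $S^*_X$, $S^*_{\overline{X}}$, and $E(X,\overline{X})$ are pairwise disjoint as sets of pairs (they are, since the first two live strictly inside $X$ and $\overline{X}$ respectively while the third consists only of crossing pairs), and that taking the symmetric difference with the crossing edges genuinely disconnects $X$ from $\overline{X}$ in $G\triangle S'$ regardless of what $S^*_X$ and $S^*_{\overline{X}}$ do internally. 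Both are immediate from the definition of $\triangle$, so no real obstacle remains; this is essentially a one-line consequence of Corollary~\ref{lem:2cutting} and its proof technique.
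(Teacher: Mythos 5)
Your proof is correct, and it is a close but slightly streamlined variant of the paper's. The paper decomposes $S^* = S^*(X) \cup S^*(\overline{X}) \cup S^*(X,\overline{X})$ exactly as you do, but then invokes the right-hand inequality of Corollary~\ref{lem:2cutting}, $\omega(G) \leq \omega(G[X]) + \omega(G[\overline{X}]) + \gamma(X)$, together with the two observations $\pi(S^*(X)) \geq \omega(G[X])$ and $\pi(S^*(\overline{X})) \geq \omega(G[\overline{X}])$, and subtracts. You bypass the intermediate quantities $\omega(G[X])$ and $\omega(G[\overline{X}])$ entirely: instead of comparing the restrictions of $S^*$ against the \emph{optimal} solutions of the two sides, you reuse those restrictions verbatim in a competing solution $S' = S^*_X \cup S^*_{\overline{X}} \cup E(X,\overline{X})$, show it is feasible, and compare $\pi(S')$ against $\pi(S^*)$ by optimality, after which the common terms cancel exactly. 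The effect is the same ``delete the cut, solve each side separately'' argument, but your version is self-contained (it does not need Corollary~\ref{lem:2cutting} as a black box) and the cancellation is literal rather than via a chain of inequalities; the paper's version buys uniformity with the surrounding lemmas by routing everything through the Cutting Lemma corollaries. Both are sound, and the step you flagged as subtle (disjointness of the three pieces of $S'$ and that $G \triangle S'$ really splits) is handled correctly.
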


\begin{proof}
The optimal solution $S^*$ can be divided into three disjoint parts:
the subset $S^*(X)$ of pairs in which both vertices are in $X$, the
subset $S^*(\overline{X})$ of pairs in which both vertices are in
$\overline{X}$, and the subset $S^*(X, \overline{X})$ of pairs in
which one vertex is in $X$ and the other vertex is in
$\overline{X}$. By Corollary~\ref{lem:2cutting},
\[
\omega(G) = \pi(S^*(X)) + \pi(S^*(\overline{X})) + \pi(S^*(X,
\overline{X})) \leq \omega(G[X]) + \omega(G[\overline{X}]) +
\gamma(X).
\]
Since $\pi(S^*(X)) \geq \omega(G[X])$ and $\pi(S^*(\overline{X})) \geq
\omega(G[\overline{X}])$, we get immediately $\pi(S^*(X,
\overline{X})) \leq \gamma(X)$.
\qed
\end{proof}
\medskip

Corollary~\ref{lem:no-more-than-cut} can be informally described as
``cut preferred'' principle, which is fundamental for this problem.
Similarly we have the following lemmas.

\begin{lemma}\label{lem:edge-cutting}
Let $X$ be a subset of vertices in a graph $G$, and let $S^*$ be any
optimal solution to $G$. Let $S^*(V, \overline{X})$ be the set of
pairs in $S^*$ in which at least one vertex is in $\overline{X}$.
Then $\omega(G) \geq \omega(G[X]) + \pi(S^*(V, \overline{X}))$.
\end{lemma}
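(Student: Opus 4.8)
The plan is to mimic the bookkeeping already used in the proof of Corollary~\ref{lem:no-more-than-cut}: decompose the optimal solution $S^*$ according to where the endpoints of each pair lie relative to $X$, and then bound the ``inside $X$'' part from below by $\omega(G[X])$.

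Concretely, I would first partition $S^*$ into just two pieces rather than three: the subset $S^*(X)$ of pairs both of whose vertices lie in $X$, and its complement $S^*(V,\overline{X})$ of pairs having at least one vertex in $\overline{X}$. Since every pair in $V^2$ falls into exactly one of these two classes, we have the exact identity $\omega(G) = \pi(S^*) = \pi(S^*(X)) + \pi(S^*(V,\overline{X}))$.

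Next I would invoke the basic observation recorded just before Lemma~\ref{lem:cutting}: restricting a solution of $G$ to an induced subgraph yields a solution of that induced subgraph. Applying this with the induced subgraph $G[X]$, the set $S^*(X)$ is a solution to $G[X]$, hence $\pi(S^*(X)) \geq \omega(G[X])$ by optimality of $\omega(G[X])$. Substituting this lower bound into the identity above gives $\omega(G) \geq \omega(G[X]) + \pi(S^*(V,\overline{X}))$, as required.

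There is essentially no obstacle here; the only point requiring a moment's care is making sure the two-way partition of $S^*$ is exhaustive and disjoint (in particular that pairs with exactly one endpoint in $\overline{X}$ are counted in $S^*(V,\overline{X})$ and not in $S^*(X)$), which is immediate from the definitions. The lemma then follows without any appeal to the upper-bound half of the Cutting Lemma, so it holds for \emph{any} optimal solution $S^*$, not merely a particular one.
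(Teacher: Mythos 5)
Your proof is correct and is essentially identical to the paper's: the same two-part partition of $S^*$ into $S^*(X)$ and $S^*(V,\overline{X})$, the same observation that $S^*(X)$ is a solution to $G[X]$ giving $\pi(S^*(X)) \geq \omega(G[X])$, and the same substitution into $\omega(G) = \pi(S^*(X)) + \pi(S^*(V,\overline{X}))$.
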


\begin{proof}
The optimal solution $S^*$ is divided into two disjoint parts: the
subset $S^*(X)$ of pairs in which both vertices are in $X$, and the
subset $S^*(V, \overline{X})$ of pairs in which at least one vertex
is in $\overline{X}$. The set $S^*(X)$ is a solution to the induced
subgraph $G[X]$. Therefore, $\pi(S^*(X)) \geq \omega(G[X])$. This
gives
\[ \omega(G) = \pi(S^*) = \pi(S^*(X)) + \pi(S^*(V, \overline{X})
    \geq \omega(G[X]) + \pi(S^*(V, \overline{X})),  \]
which proves the lemma.
\qed
\end{proof}

\begin{lemma}%\label{lem:edge-cutting}
Let $X$ be a subset of vertices in a graph $G$, and let $B_X$ be the
set of vertices in $X$ that are adjacent to vertices in
$\overline{X}$. Then for any optimal solution $S^*$ to $G$, if we
let $S^*(B_X)$ be the set of pairs in $S^*$ in which both vertices
are in $B_X$, then $\omega(G) + \pi(S^*(B_X)) \geq \omega(G[X]) +
\omega(G[\overline{X} \cup B_X])$.
\end{lemma}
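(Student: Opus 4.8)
The plan is to mimic the proofs of Lemma~\ref{lem:edge-cutting} and Corollary~\ref{lem:no-more-than-cut}: restrict the optimal solution $S^*$ to each of the two induced subgraphs $G[X]$ and $G[\overline{X}\cup B_X]$ and then correct for the double counting on their overlap. Write $A = X\setminus B_X$ for the set of vertices of $X$ having no neighbor in $\overline{X}$, so that $X = A\cup B_X$ and $\overline{X}\cup B_X = V\setminus A$; in particular the two vertex sets $X$ and $\overline{X}\cup B_X$ intersect in exactly $B_X$.

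First I would set $S_1 = S^*(X)$, the set of pairs of $S^*$ with both ends in $X$, and $S_2 = S^*(\overline{X}\cup B_X)$, the set of pairs of $S^*$ with both ends in $\overline{X}\cup B_X$. By the basic observation preceding Lemma~\ref{lem:cutting}, the restriction of a solution to an induced subgraph is again a solution; hence $S_1$ is a solution to $G[X]$ and $S_2$ is a solution to $G[\overline{X}\cup B_X]$, which gives $\pi(S_1)\ge\omega(G[X])$ and $\pi(S_2)\ge\omega(G[\overline{X}\cup B_X])$.

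Next comes the counting step, which is really the only point needing care. Both $S_1$ and $S_2$ are subsets of $S^*$, so $S_1\cup S_2\subseteq S^*$; and a pair lies in $S_1\cap S_2$ if and only if both of its ends lie in $X\cap(\overline{X}\cup B_X)=B_X$, i.e. $S_1\cap S_2 = S^*(B_X)$. By inclusion--exclusion on weights, $\pi(S_1)+\pi(S_2) = \pi(S_1\cup S_2)+\pi(S_1\cap S_2) \le \pi(S^*)+\pi(S^*(B_X)) = \omega(G)+\pi(S^*(B_X))$. Combining this with the two inequalities from the previous paragraph yields $\omega(G[X])+\omega(G[\overline{X}\cup B_X]) \le \pi(S_1)+\pi(S_2) \le \omega(G)+\pi(S^*(B_X))$, which is exactly the claim.

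I do not anticipate a genuine obstacle: this is another instance of the ``restrict and recount'' theme already used twice in this section, the only subtlety being to identify the overlap correctly as the pairs lying \emph{inside} $B_X$ (not inside $X$, nor inside $\overline{X}$). One should also note that when $B_X=\emptyset$ (e.g.\ $X$ a union of connected components) the statement degenerates gracefully, reducing to the left inequality of Corollary~\ref{lem:2cutting}.
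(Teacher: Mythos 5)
Your proof is correct and follows essentially the same route as the paper's: restrict $S^*$ to the two induced subgraphs $G[X]$ and $G[\overline{X}\cup B_X]$ and account for the overlap, which is exactly $S^*(B_X)$. The paper phrases this via an explicit three-way decomposition of $S^*$ into $S^*(X)$, $S^*(\overline{X})$, $S^*(X,\overline{X})$ and tracks which pieces enter each subgraph bound, whereas you compress the same bookkeeping into a single inclusion--exclusion identity; the underlying argument is identical.
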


\begin{proof}
Again, the optimal solution $S^*$ can be divided into three disjoint
parts: the subset $S^*(X)$ of pairs in which both vertices are in
$X$, the subset $S^*(\overline{X})$ of pairs in which both vertices
are in $\overline{X}$, and the subset $S^*(X, \overline{X})$ of
pairs in which one vertex is in $X$ and the other vertex is in
$\overline{X}$. We also denote by $S^*(B_X, \overline{X})$ the
subset of pairs in $S^*$ in which one vertex is in $B_X$ and the
other vertex is in $\overline{X}$. Since $S^*(X)$ is a solution to
the induced subgraph $G[X]$, we have
 \begin{eqnarray*}
 \omega(G) + \pi(S^*(B_X))
  & = & \pi(S^*(X)) + \pi(S^*(\overline{X}))
  + \pi(S^*(X, \overline{X})) + \pi(S^*(B_X)) \\
  & \geq & \omega(G[X]) + \pi(S^*(\overline{X})) +
  \pi(S^*(X, \overline{X})) + \pi(S^*(B_X))\\
  & \geq & \omega(G[X]) + \pi(S^*(\overline{X})) +
  \pi(S^*(B_X, \overline{X})) + \pi(S^*(B_X)).
 \end{eqnarray*}
The last inequality is because $B_X \subseteq X$, so $S^*(B_X,
\overline{X}) \subseteq S^*(X, \overline{X})$. Since $S' =
S^*(\overline{X}) \cup S^*(B_X, \overline{X}) \cup S^*(B_X)$ is the
subset of pairs in $S^*$ in which both vertices are in the induced
subgraph $G[\overline{X} \cup B_X]$, $S'$ is a solution to the
induced subgraph $G[\overline{X} \cup B_X]$. This gives
\[ \pi(S') = \pi(S^*(\overline{X})) + \pi(S^*(B_X, \overline{X}))
   + \pi(S^*(B_X)) \geq \omega(G[\overline{X} \cup B_X]), \]
which implies the lemma immediately.
\qed
\end{proof}
\medskip

The above results that reveal the relations between the structures
of the {\sc cluster editing} problem and graph edge cuts not only
form the basis for our kernelization results presented in the
current paper, but also are of their own importance and interests.

\section{The kernelization algorithm}\label{sec:2k}

Obviously, the number of different vertices included in a solution
$S$ of $k$ vertex pairs to a graph $G$ is upper bounded by $2k$.
Thus, if we can also bound the number of vertices that are not
included in $S$, we get a kernel. For such a vertex $v$, the clique
containing $v$ in $G \triangle S$ must be $G[N[v]]$. Inspired by
this, our approach is to check the closed neighborhood $N[v]$ for
each vertex $v$.

The observation is that if an induced subgraph (e.g. the closed
neighborhood of a vertex) is very ``dense inherently'', while is
also ``loosely connected to outside'', (\emph{i.e.}  there are very
few edges in the cut of this subgraph), it might be cut off and
solved separately.  By the cutting lemmas, the size of a solution
obtained as such should not be too far away from that of an optimal
solution. Actually, we will figure out the conditions under which
they are equal.

The subgraph we are considering is $N[v]$ for some vertex $v$. For
the connection of $N[v]$ to outside, a good measurement is
$\gamma(N[v])$. Thus, here we only need to define the density.  A
simple fact is that the fewer edges missing, the denser the subgraph
is. Therefore, to measure the density of $N[v]$, we define the
\emph{deficiency $\delta(v)$} of $N[v]$ as the total weight of
anti-edges in $G[N[v]]$, which is formally given by  $\delta(v) =
\pi( \{\{x,y\} \mid x, y \in N(v), [x, y] \not\in E\})$.

Suppose that $N[v]$ forms a single clique with no other vertices in
the resulting graph $G \triangle S$. Then anti-edges of total weight
$\delta(v)$ have to be added to make $N[v]$ a clique, and edges of
total weight $\gamma(N[v])$ have to be deleted to make $N[v]$
disjoint.  Based on this we define the \emph{stable cost} of a
vertex $v$ as $\rho(v) = 2\delta(v) + \gamma(N[v])$, and we say
$N[v]$ is \emph{reducible} if $\rho(v) < |N[v]|$.

% It is well known that vertices in a simple series module $M$ never
% gets splitted by any optimal solution
% \cite[Lemma~1]{guo-kernel-cluster-editing}, and therefore if any
% vertex in it is untouched, all of them are untouched, moreover, the
% cluster containing it should be exactly $N[v]$.  That is, if $M$ are
% untouched by a solution $S$, all anti-edges in $G[N(v)]$ need to be
% added, and all edges in $E(N[v], \overline{N[v]})$ need to be
% deleted, that is, ${N[v] \choose 2} \backslash E(N[v]) \subseteq S$
% and $E(N[v], \overline{N[v]}) \subseteq S$.

\begin{lemma}\label{lem:non-separable}
%   Let $M$ be a simple series module with $2\delta(v) + \gamma(N[v])
%   < |N[v]$, there is an optimal solution in which $N[v]$ in the
%   same cluster.
For any vertex $v$ such that $N[v]$ is reducible, there is an
optimal solution $S^*$ to $G$ such that the vertex set $N[v]$ is
entirely contained in a single clique in the graph $G \triangle
S^*$.
\end{lemma}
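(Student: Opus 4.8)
The plan is to start from an arbitrary optimal solution $S^*$ and, assuming $N[v]$ is split among several cliques in $G \triangle S^*$, construct another solution of no greater weight in which all of $N[v]$ lies in one clique. Let $C_1, \dots, C_t$ be the cliques of $G \triangle S^*$ that meet $N[v]$, and write $A_j = C_j \cap N[v]$ and $R_j = C_j \setminus N[v]$. Say $v \in C_1$. I would form the new solution $S'$ by modifying $S^*$ so that the new clustering merges $A_2, \dots, A_t$ into $C_1$ while leaving each $R_j$ where it was --- that is, in $G \triangle S'$ the clique containing $v$ is $C_1 \cup A_2 \cup \dots \cup A_t = C_1 \cup (N[v] \setminus A_1)$, and the remaining pieces are $R_2, \dots, R_t$ (each $R_j$ becomes its own clique, or could be re-clustered optimally, but keeping them as-is suffices for an upper bound).

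The next step is to bound $\pi(S') - \pi(S^*)$ from above by a quantity that is at most $0$ when $\rho(v) < |N[v]|$. The edit operations that change when passing from the $S^*$-clustering to the $S'$-clustering are exactly: (i) adding all anti-edges inside $N[v]$ that go between different $A_j$'s --- their total weight is at most $\delta(v)$; (ii) adding edges to glue each $A_j$ ($j \ge 2$) to the rest of $C_1$'s new content; (iii) deleting the edges of $E(A_j, R_j)$ for $j \ge 2$, which detach the $A_j$'s from their old companions; and on the credit side we \emph{save} whatever edits $S^*$ had used to separate the $A_j$'s from $C_1$ inside the old clustering. The key accounting observation --- this is the analogue of the ``cut preferred'' principle, Corollary~\ref{lem:no-more-than-cut} --- is that the added edges in (ii) together with the deleted edges in (iii), minus the saved edits, is controlled by the weight of the cut $\gamma(N[v])$ plus one more copy of $\delta(v)$ coming from anti-edges inside $N[v]$ that $S^*$ may already have charged. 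Carrying this out should give $\pi(S') \le \pi(S^*) + 2\delta(v) + \gamma(N[v]) - |\{\text{cliques among } C_1,\dots,C_t\}| \cdot(\text{something}) \le \pi(S^*) + \rho(v) - |N[v]|$ (roughly: each vertex of $N[v]$ outside $C_1$ contributed at least one edit in $S^*$ that we now reclaim, and $N[v]$ nonempty forces at least $|N[v]|$ such reclaimed units once we also use that $v$'s own incident edits are reclaimed). Since $\rho(v) < |N[v]|$, this yields $\pi(S') < \pi(S^*)$, contradicting optimality unless $t = 1$; and if $\pi(S') = \pi(S^*)$ is the best we can force in the boundary bookkeeping, then $S'$ is itself an optimal solution with the desired property.

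The main obstacle I expect is the precise bookkeeping in step two: one must be careful not to double-count the anti-edges of $G[N[v]]$ (they appear both as ``forced additions when merging'' and potentially as edits already in $S^*$), and not to over-credit the savings --- $S^*$ might already have had $N[v]$ mostly together, so the number of vertices of $N[v]$ that genuinely ``move'' can be smaller than $|N[v]|$. The clean way around this is to do the estimate per clique $C_j$, $j \ge 2$: moving $A_j$ into $C_1$ costs at most (edges added to $C_1$-side) $+\,\pi(E(A_j,R_j))$, while $S^*$ must already have spent at least $\pi(E(A_j, A_1)) + (\text{anti-edges between } A_j \text{ and } A_1)$ plus the analogous cost between $A_j$ and $v$ specifically --- and then sum over $j$, folding the within-$N[v]$ anti-edge terms into $2\delta(v)$ and the $E(A_j,R_j)$ and $E(A_j,A_1)$ terms into $\gamma(N[v])$, with the factor ``$|N[v]|$'' emerging because each of the $|N[v]| - |A_1|$ displaced vertices, plus the at-least-one edit incident to $v$ within $N[v]$ when $t \ge 2$, contributes a unit to the deficit $\rho(v) - |N[v]| < 0$. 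Once the per-clique inequality is set up correctly, summation and the hypothesis $\rho(v) < |N[v]|$ finish the argument.
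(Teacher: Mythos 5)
Your overall strategy --- take an arbitrary optimal $S^*$, reunite the pieces of $N[v]$, and argue the cost cannot go up --- is reasonable in spirit, but the two heuristics you offer for closing the bookkeeping both fail, and together they mark a genuine gap. First, the \emph{specific modification is wrong}: you glue $A_2,\dots,A_t$ onto $C_1$, so the new cluster is $R_1\cup N[v]$, and $S'$ must insert every anti-edge between $R_1$ and $N[v]\setminus A_1$. Those pairs were free for $S^*$ (their endpoints sat in different cliques), they are not across the cut of $N[v]$ (both endpoints lie in the new cluster, and one endpoint lies outside $N[v]$), and they are not inside $N[v]$ --- so they are charged neither to $\gamma(N[v])$ nor to $\delta(v)$. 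Your claim that (ii)$+$(iii) minus the savings is ``controlled by $\gamma(N[v])$ plus one more copy of $\delta(v)$'' therefore does not hold; you would need instead to cut $N[v]$ off as its own clique, separating it from $R_1$ as well, which is exactly what the paper's upper bound $\omega(G)\le\omega(G[\overline{N[v]}])+\delta(v)+\gamma(N[v])$ encodes. Second, the \emph{savings estimate is too weak}: ``each of the $|N[v]|-|A_1|$ displaced vertices reclaims one unit, plus one for $v$'' gives at most $|N[v]|-|A_1|+1$, which is strictly less than the $|N[v]|$ you need whenever $|A_1|\ge 2$ --- and the extra unit for $v$ is already counted, since an edit between $v$ and a displaced vertex is incident to that displaced vertex. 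The mechanism that actually closes this in the paper is \emph{quadratic}, not linear: writing the split as $N[v]=X\cup Y$, the reclaimed cost is $\pi(P(X,Y))-\delta(v)\ge |X||Y|-\delta(v)$, and then $|X||Y|\ge|X|+|Y|=|N[v]|$ unless $\min(|X|,|Y|)=1$; the singleton case forces all inequalities tight, giving $\omega(G)=\omega(G[\overline{N[v]}])+\gamma(N[v])+\delta(v)$, so that cutting $N[v]$ off is itself optimal.

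To contrast approaches: the paper never modifies $S^*$ directly. It sandwiches $\omega(G)$ --- above by the ``cut off $N[v]$ and complete it'' solution (Corollary~\ref{lem:2cutting}), below by $\omega(G[\overline{N[v]}])+\pi(S(X,Y))$ (Lemma~\ref{lem:edge-cutting}) where $X,Y$ split $N[v]$ across cliques --- and lets the two bounds collide into $\pi(P(X,Y))\le\rho(v)<|N[v]|$, after which the $|X||Y|$-vs-$|X|+|Y|$ argument finishes. If you replace your modification by ``detach $N[v]$ into its own clique'' and replace the per-vertex count by the $|X||Y|$ bound, your exchange argument becomes an unrolled version of the same sandwich; as written, neither step is in place.
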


\begin{proof}
Let $S$ be an optimal solution to the graph $G$, and pick any vertex
$v$ such that $N[v]$ is reducible, i.e., $\rho(v) < |N[v]|$. Suppose
that $N[v]$ is not entirely contained in a single clique in $G
\triangle S$, i.e., $N[v] = X \cup Y$, where $X \neq \emptyset$ and
$Y \neq \emptyset$, such that $Y$ is entirely contained in a clique
$C_1$ in $G \triangle S$ while $X \cap C_1 = \emptyset$ (note that
we do not assume that $X$ is in a single clique in $G \triangle S$).

Inserting all missing edges between vertices in $N[v]$ will
transform the induced subgraph $G[N[v]]$ into a clique. Therefore,
$\omega(G[N[v]]) \leq \delta(v)$. Combining this with
Corollary~\ref{lem:2cutting}, we get
\begin{eqnarray}
 \omega(G)& \leq &
  \omega(G[N[v]]) + \omega(G[\overline{N[v]}]) + \gamma(N[v]) \nonumber\\
  & \leq & \delta(v) + \omega(G[\overline{N[v]}]) + \gamma(N[v])
     \label{eq1} \\
          & = & \omega(G[\overline{N[v]}]) + \rho(v)  - \delta(v).
          \nonumber
\end{eqnarray}

Let $S(V, N[v])$ be the set of pairs in the solution $S$ in which at
least one vertex is in $N[v]$, and let $S(X, Y)$ be the set of pairs
in $S$ in which one vertex is in $X$ and the other vertex is in $Y$.
Also, let $P(X,Y)$ be the set of all pairs $(x, y)$ such that $x \in
X$ and $y \in Y$. Obviously, $\pi(S(V, N[v])) \geq \pi(S(X, Y))$
because $X \subseteq V$ and $Y \subseteq N[v]$. Moreover, since the
solution $S$ places the sets $X$ and $Y$ in different cliques, $S$
must delete all edges between $X$ and $Y$. Therefore $S(X, Y)$ is
exactly the set of edges in $G$ in which one end is in $X$ and the
other end is in $Y$. Also, by the definition of $\delta(v)$ and
because both $X$ and $Y$ are subsets of $N[v]$, the sum of the
weights of all anti-edges between $X$ and $Y$ is bounded by
$\delta(v)$. Thus, we have $\pi(S(X, Y)) + \delta(v) \geq \pi(P(X,
Y))$. Now by Lemma~\ref{lem:edge-cutting},
\begin{eqnarray}
    \omega(G)& \geq & \omega(G[\overline{N[v]}]) + \pi(S (V,N[v]))
      \nonumber\\
    & \geq & \omega(G[\overline{N[v]}]) + \pi(S(X, Y)) \label{eq2} \\
    & \geq & \omega(G[\overline{N[v]}]) + \pi(P(X, Y)) - \delta(v).
      \nonumber
  \end{eqnarray}

Combining (\ref{eq1}) and (\ref{eq2}), and noting that the weight of
each vertex pair is at least $1$, we get
\begin{equation}
  |X||Y| \leq \pi(P(X, Y)) \leq \rho(v) < |N[v]| = |X| + |Y|.
  \label{eq4}
\end{equation}
This can hold true only when $|X| = 1$ or $|Y| = 1$. In both cases,
we have $|X| \cdot |Y| = |X| + |Y| - 1$. Combining this with
(\ref{eq4}), and noting that all the quantities are integers, we
must have
\[ \pi(P(X, Y))= \rho(v), \]
which, when combined with
(\ref{eq1}) and (\ref{eq2}), gives
\begin{equation}
  \omega(G) = \omega(G[\overline{N[v]}])+\rho(v)-\delta(v) =
  \omega(G[\overline{N[v]}]) + \gamma(N[v]) + \delta(v). \label{eq3}
\end{equation}
Note that $\gamma(N[v])+\delta(v)$ is the minimum cost to insert
edges into and delete edges from the graph $G$ to make $N[v]$ a
disjoint clique. Therefore, Equality (\ref{eq3}) shows that if we
first apply edge insert/delete operations of minimum weight to make
$N[v]$ a disjoint clique, then apply an optimal solution to the
induced subgraph $G[\overline{N[v]}]$, then we have an optimal
solution $S^*$ to the graph $G$. This completes the proof of the
lemma because the optimal solution $S^*$ has the vertex set $N[v]$
entirely contained in a single clique in the graph $G \triangle
S^*$. \qed
\end{proof}
\medskip

Based on Lemma~\ref{lem:non-separable}, we have the following
reduction rule:
\begin{kernelrule}\label{rule:deficient}
%If there is a vertex $v$ satisfying $\rho(v) < |N[v]|$,
  For a vertex $v$ such that $N[v]$ is reducible, insert edges
  between anti-edges in $G[N[v]]$ to make $G[N[v]]$ a clique,
  and decrease $k$ accordingly.
\end{kernelrule}

%%  without modular decomposition, following is not applicable.
%%
%%  If there are less vertices in $N(v)$ than $M$, with the help of
%%  Lemma~2 of \cite{guo-kernel-cluster-editing}, we can cut off $N[v]$,
%%  that is:
%%
%%  \addtocounter{kernelrule}{-1}
%%  \begin{kernelrule}[B]\label{rule:deficient-2}
%%    If there is a reducible simple series module $M$ satisfying $|N(v)|
%%    \leq |M|$, then add edges to connect $N(v)$ ($N[v]$) to a clique and
%%    cut $N[v]$ off $G$ and decrease $k$ correspondingly.
%%  \end{kernelrule}

After Step~\ref{rule:deficient}, the induced subgraph $G[N[v]]$
becomes a clique with $\delta(v) = 0$ and $\rho(v) = \gamma(N[v])$.
Now we use the following rule to remove the vertices in $N(N[v])$
that are loosely connected to $N[v]$ (recall that $N(N[v])$ is the
set of vertices that are not in $N[v]$ but adjacent to some vertices
in $N(v)$, and that for two vertex subsets $X$ and $Y$, $E(X, Y)$
denotes the set of edges that has one end in $X$ and the other end
in $Y$).

\begin{kernelrule}\label{rule:multi-neighbor}
  % Let $v$ be a vertex with $\rho(v) < |N[v]|$, and suppose that we
  % have applied Step~\ref{rule:deficient} on $N[v]$.
  Let $v$ be a vertex such that $N[v]$ is reducible on which
  Step~\ref{rule:deficient} has been applied.  For each vertex $x$
  in $N(N[v])$, if $\pi(E(x, N(v))) \leq |N[v]|/2$, then
  delete all edges in $E(x, N(v))$ and decrease $k$ accordingly.
\end{kernelrule}

We say that a reduction step $R$ is {\it safe} if after edge
operations of cost $c_R$ by the step, we obtain a new graph $G'$
such that the original graph $G$ has a solution of weight bounded by
$k$ if and only if the new graph $G'$ has a solution of weight
bounded by $k - c_R$.

\begin{lemma}
Step~\ref{rule:multi-neighbor} is safe.
\end{lemma}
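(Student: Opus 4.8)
The plan is to prove the stronger equality $\omega(G)=\omega(G')+c_R$, where $G'$ is the graph produced by Step~\ref{rule:multi-neighbor} and $c_R$ is the total weight of the edges it deletes; safety follows at once. It suffices to treat the deletion of $E(x,N(v))$ for a single qualifying vertex $x$ and then iterate, because such a deletion only removes cut-edges of $N[v]$ incident to $x$: it decreases $\gamma(N[v])$ (so $N[v]$ stays reducible), keeps $\delta(v)=0$ and $G[N[v]]$ a clique, and leaves every other set $E(x',N(v))$ untouched, so the hypotheses of the step persist through the process. So fix $x$ with $\pi(E(x,N(v)))\le|N[v]|/2$ and write $D=E(x,N(v))$, $c_R=\pi(D)$, $G'=G\triangle D$.

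The inequality $\omega(G)\le\omega(G')+c_R$ is the routine direction: for any solution $S'$ of $G'$, the set $D\triangle S'$ is a solution of $G$ (since $G\triangle(D\triangle S')=G'\triangle S'$) of weight at most $\pi(D)+\pi(S')$. This already settles the ``if'' part of safety. For the reverse inequality $\omega(G)\ge\omega(G')+c_R$, the starting point is Lemma~\ref{lem:non-separable}: since $N[v]$ is reducible, fix an optimal solution $S^*$ of $G$ in which $N[v]$ lies entirely inside a single clique $C$ of $G\triangle S^*$.

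The first key step is to show $|C\setminus N[v]|\le 1$. If not, replace the clique $C$ in $G\triangle S^*$ by the two cliques $N[v]$ and $C\setminus N[v]$; the cost changes by $\pi(E(N[v],C\setminus N[v]))$ minus the weight of the anti-edges between $N[v]$ and $C\setminus N[v]$. The first term is at most $\gamma(N[v])$, while among the $|N[v]|\cdot|C\setminus N[v]|$ pairs between the two sets at most $\gamma(N[v])$ are edges, so the second term is at least $|N[v]|\cdot|C\setminus N[v]|-\gamma(N[v])$; hence the change is at most $2\gamma(N[v])-|N[v]|\cdot|C\setminus N[v]|<0$, using $\gamma(N[v])<|N[v]|$ and $|C\setminus N[v]|\ge 2$, contradicting optimality of $S^*$. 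The second key step locates $x$. If $x\notin C$, then every edge of $D=E(x,N(v))\subseteq E(x,C)$ is deleted by $S^*$, so $D\subseteq S^*$ and $S^*\setminus D$ is a solution of $G'$ of weight $\omega(G)-c_R$, as required. If $x\in C$, then $x\notin N[v]$ together with $|C\setminus N[v]|\le 1$ forces $C=N[v]\cup\{x\}$; now move $x$ to its own singleton clique. The only edited pairs are those between $x$ and $N[v]$, and the cost changes by $\pi(E(x,N(v)))$ minus the weight of the anti-edges from $x$ to $N(v)$ minus $\pi(x,v)$. Since $x$ has at most $\pi(E(x,N(v)))\le|N[v]|/2$ neighbours in $N(v)$, it has at least $|N[v]|/2-1$ non-neighbours there, so this change is at most $|N[v]|/2-(|N[v]|/2-1)-1=0$. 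We thus obtain an optimal solution of $G$ with $x$ outside the clique containing $N[v]$, which reduces to the previous case.

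The main obstacle is precisely the sub-case $x\in C$: a priori $x$ could be merged with $N[v]$ alongside many other vertices, so that pulling $x$ out by itself costs more than it saves. The remedy is to first pin down the structure of $C$ --- reducibility of $N[v]$ forces at most one vertex of $C$ to lie outside $N[v]$ --- after which the threshold $\pi(E(x,N(v)))\le|N[v]|/2$ makes the exchange exactly cost-free, with the missing edge $[x,v]$ supplying the single unit of slack that absorbs the parity loss. Combining the two inequalities, and applying the single-vertex argument to each qualifying $x$ in turn, yields $\omega(G)=\omega(G')+c_R$, and hence the safety of Step~\ref{rule:multi-neighbor}.
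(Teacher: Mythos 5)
Your proof is correct and follows the same overall skeleton as the paper's: invoke Lemma~\ref{lem:non-separable} to put $N[v]$ inside one clique $C$, show $|C\setminus N[v]|\le 1$, and then argue that a qualifying $x$ can be assumed to lie outside $C$. The differences are in the mechanics and in completeness. For $|C\setminus N[v]|\le 1$, the paper lower-bounds the total weight of inserted pairs between $N[v]$ and $\overline{N[v]}$ and derives a contradiction with Corollary~\ref{lem:no-more-than-cut}; you instead split $C$ into $N[v]$ and $C\setminus N[v]$ and show the exchange strictly lowers the cost, which is a self-contained local argument that avoids the corollary. For the case $x\in C$, the paper gives a one-sentence comparison (``deleting is at least as good as inserting''), whereas you track the exact cost delta and isolate the $\pi(x,v)\ge 1$ term that makes the exchange cost-free; this is the more careful version of the same exchange. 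You also supply two pieces of bookkeeping the paper leaves implicit: the easy direction $\omega(G)\le\omega(G')+c_R$, and the observation that deleting $E(x,N(v))$ for one qualifying $x$ preserves $N[v]$, $\delta(v)=0$, reducibility, and the sets $E(x',N(v))$ for the other candidates, so the step can be iterated safely. Both proofs are sound; yours is more explicit and uses a direct exchange where the paper leans on its cut-preferred corollary.
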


\begin{proof}
  By Lemma~\ref{lem:non-separable}, there is an optimal solution $S$
  to the graph $G$ such that $N[v]$ is entirely contained in a single
  clique $C$ in the graph $G \triangle S$. We first prove, by
  contradiction, that the clique $C$ containing $N[v]$ in the graph
  $G \triangle S$ has at most one vertex in $\overline{N[v]}$.
  Suppose that there are $r$ vertices $u_1$, $\dots$, $u_r$ in
  $\overline{N[v]}$ that are in $C$, where $r \geq 2$.  For $1 \leq i
  \leq r$, denote by $c_i$ the total weight of all edges between $u_i$
  and $N[v]$, and by $c_i'$ the total weight of all pairs (both edges
  and anti-edges) between $u_i$ and $N[v]$. Note that $c_i' \geq
  |N[v]|$ and $\sum_{i=1}^r c_i \leq \gamma(N[v])$. Then in the
  optimal solution $S$ to $G$, the total weight of the edges inserted
  between $N[v]$ and $\overline{N[v]}$ is at least
\begin{eqnarray*}
\sum_{i=1}^r (c'_i - c_i)
 & \geq & \sum_{i=1}^r (|N[v]| - c_i)
  =    r|N[v]| - \sum_{i=1}^r c_i \\
 & \geq & r|N[v]| - \gamma(N[v])
  \geq 2|N[v]| - \gamma(N[v]) \\
 & > & 2|N[v]| - |N[v]| = |N[v]|
 > \gamma(N[v]),
\end{eqnarray*}
where we have used the fact $|N[v]| > \gamma(N[v])$ (this is because
by the conditions of the step, $\rho(v) = 2 \delta(v) + \gamma(N[v])
< |N[v]|$). But this contradicts
Corollary~\ref{lem:no-more-than-cut}.

Therefore, there is at most one vertex $x$ in $N(N[v])$ that is in
the clique $C$ containing $N[v]$ in the graph $G \triangle S$. Such
a vertex $x$ must satisfy the condition $\pi(E(x, N(v))) >
|N[v]|/2$: otherwise deleting all edges in $E(x, N(v))$ would result
in a solution that is at least as good as the one that inserts all
missing edges between $x$ and $N[v]$ and makes $N[v] \cup \{x\}$ a
clique. Thus, for a vertex $x$ in $N(N[v])$ with $\pi(E(x, N(v)))
\leq |N[v]|/2$, we can always assume that $x$ is not in the clique
containing $N[v]$ in the graph $G \triangle S$. In consequence,
deleting all edges in $E(x, N(v))$ for such a vertex $x$ is safe.
\qed
\end{proof}
\medskip

The structure of $N[v]$ changes after the above steps. The result
can be in two possible cases: (1) no vertex in $N(N[v])$ survives,
and $N[v]$ becomes an isolated clique -- then by
Corollary~\ref{lem:components}, we can simply delete the clique; and
(2) there is one vertex $x$ remaining in $N(N[v])$ (note that there
cannot be more than one vertices surviving -- otherwise it would
contradict the assumption $\gamma(N[v]) \leq \rho(v) < |N[v]|$). In
case (2), the vertex set $N[v]$ can be divided into two parts $X =
N[v] \cap N(x)$ and $Y = N[v]\backslash X$. From the proofs of the
above lemmas, we are left with only two options: either
disconnecting $X$ from $x$ with edge cost $c_X$, or connecting $Y$
and $x$ with edge cost $c_Y$. Obviously $c_X > c_Y$. Since both
options can be regarded as connection or disconnection between the
vertex set $N[v]$ and the vertex $x$, we can further reduce the
graph using the following reduction step:

\begin{kernelrule}\label{rule:pendent}
%   If $v$ is a vertex satisfying $\rho(v) < |N[v]|$, and suppose
%   after we have applied Steps~\ref{rule:deficient} and
%   \ref{rule:multi-neighbor} on $N[v]$,
  Let $v$ be a vertex such that $N[v]$ is reducible on which
  Steps~\ref{rule:deficient} and \ref{rule:multi-neighbor} have been
  applied.  If there still exists a vertex $x$ in $N(N[v])$,
  then merge $N[v]$ into a single vertex $v'$, connect $v'$ to $x$
  with weight $c_X - c_Y$, set weight of each anti-edge between $v'$
  and other vertex to $+\infty$, and decrease $k$ by $c_Y$.
\end{kernelrule}

The correctness of this step immediately follows from above
argument.
%%%%% Chen *** this definitely needs a detailed proof.

Note that the conditions for all the above steps are only checked
once.  If they are satisfied, we apply all three steps one by one,
or else we do nothing at all.  So they are actually the parts of a
single reduction rule presented as follows:

\paragraph{The Rule.}\label{rule:the-single-rule}
  Let $v$ be a vertex satisfying $2\delta(v) + \gamma(N[v]) < |N[v]|$,
  then:
  \begin{enumerate}
\vspace{-2mm}
  \item add edges to make $G[N[v]]$ a clique and decrease $k$
    accordingly;
    \vspace{-2mm}
  \item for each vertex $x$ in $N(N[v])$ with $\pi(E(x, N[v]))
    \leq |N[v]|/2$, remove all edges in $E(x, N[v])$ and decrease $k$
    accordingly;
    \vspace{-2mm}
  \item if a vertex $x$ in $N(N[v])$ survives, merge $N[v]$
    into a single vertex (as described above) and decrease $k$ accordingly.
  \end{enumerate}

Now the following lemma implies Theorem~\ref{thm:2k} directly.

\begin{lemma}\label{lem:bound}
  If an instance of the weighted {\sc cluster editing} problem reduced
  by our reduction rule has more than $2k$ vertices, it has no
  solution of weight $\leq k$.
\end{lemma}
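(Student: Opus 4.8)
The plan is to show that after the reduction rule has been applied exhaustively, every vertex of the reduced graph is ``charged'' to the optimal solution, with at most two vertices per unit of weight. Let $(G,\pi,k)$ be the reduced instance, and suppose $G$ has a solution $S^*$ of weight at most $k$; I must then bound $|V(G)|$ by $2k$. The key observation, already noted before Lemma~\ref{lem:non-separable}, is that any vertex $v$ of $G$ that is \emph{not} touched by $S^*$ (i.e.\ no pair of $S^*$ contains $v$) must sit in a clique of $G\triangle S^*$ equal to $G[N[v]]$; in particular $G[N[v]]$ is already a clique and $\gamma(N[v])=0$, so $N[v]$ is a connected component of $G$. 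Since the rule deletes any isolated clique, no such $v$ exists unless the whole component was removed, so in fact \emph{every} vertex of the reduced graph is incident to some pair of $S^*$.

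From ``every vertex is touched'' one gets $|V(G)|\le 2|S^*|\le 2\pi(S^*)\le 2k$, using that each pair of $S^*$ has both endpoints among the touched vertices (so there are at most $2|S^*|$ of them) and that $\pi\ge 1$ on every pair. That is the entire skeleton. The first thing I would do is make precise the claim ``$v$ untouched $\Rightarrow$ $N[v]$ is a component'': if $v$ is in no pair of $S^*$, then in $G\triangle S^*$ the clique containing $v$ has exactly the neighbours of $v$ in $G\triangle S^*$, which are exactly $N(v)$ since no incident pair was changed; and the clique being a clique forces $G[N[v]]$ complete, while the clique being a full component of $G\triangle S^*$ together with ``no edge of $G$ incident to $v$ was deleted'' forces $E(N[v],\overline{N[v]})=\emptyset$, i.e.\ $\gamma(N[v])=0$ and $G[N[v]]$ a component of $G$.

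The step I expect to require the most care is ruling out that such a component survives the reduction. One must check that a component which is already a clique is genuinely removed: if $G[N[v]]$ is a clique and a component, then $\delta(v)=0$ and $\gamma(N[v])=0$, so $\rho(v)=0<|N[v]|$ and $N[v]$ is reducible; applying the rule, part~1 does nothing, part~2 finds $N(N[v])=\emptyset$, and then (as observed in case~(1) in the text, via Corollary~\ref{lem:components}) the isolated clique $N[v]$ is deleted. One subtlety: the rule may have introduced a ``merged'' vertex $v'$ together with $+\infty$-weight anti-edges; I should note that these do not create untouched vertices either, since if $v'$ were untouched the $+\infty$ anti-edges incident to it would have to stay as anti-edges, again forcing $N[v']$ to be a component, which (being a clique) would have been deleted — so the same argument applies uniformly to the reduced graph. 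With that in hand, the counting bound $|V(G)|\le 2k$ is immediate, proving the lemma and hence Theorem~\ref{thm:2k}.
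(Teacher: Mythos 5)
Your central claim --- ``$v$ untouched $\Rightarrow G[N[v]]$ is already a clique and $\gamma(N[v])=0$, hence $N[v]$ is a connected component of $G$'' --- is false, and the whole proposal collapses on it. The error is a confusion between $G$ and $G\triangle S^*$. If $v$ is untouched, then $N[v]$ is a clique and a full component \emph{in $G\triangle S^*$}, but $S^*$ is perfectly free to \emph{insert} anti-edges inside $N(v)$ (pairs not involving $v$) and to \emph{delete} cut edges in $E(N(v),\overline{N[v]})$ (again, pairs not involving $v$). So in $G$ itself, $G[N[v]]$ need not be complete, and $\gamma(N[v])$ need not be zero. A concrete counterexample: take $K_n$ on $\{v_1,\dots,v_n\}$ together with a pendant vertex $w$ attached to $v_1$ by an edge of weight $n$, with $\pi(v_i,w)=2$ for $i\ge 2$. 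For $n\ge 3$ one checks $\rho(v_i)\ge|N[v_i]|$ for every vertex, so no reduction fires, yet the optimal solution is to delete the single edge $[v_1,w]$, which leaves $v_2,\dots,v_n$ untouched even though $\gamma(N[v_2])=n\ne 0$. So ``every vertex is touched'' is simply not available in the reduced instance, and the clean count $|V|\le 2|S^*|$ does not follow.

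The paper's proof anticipates exactly this: it \emph{permits} untouched vertices $v_1,\dots,v_r$ and instead runs a charging argument. Each pair in $S^*$ is split, half its weight to each endpoint. If $v_i$ is untouched, then $S^*$ must insert all anti-edges inside $N[v_i]$ (charging $\delta(v_i)$ entirely to $N[v_i]$) and delete all edges of the cut of $N[v_i]$ (charging $\gamma(N[v_i])/2$ to $N[v_i]$), so $N[v_i]$ absorbs at least $\delta(v_i)+\gamma(N[v_i])/2=\rho(v_i)/2\ge|N[v_i]|/2$, the last inequality being exactly what irreducibility buys. Using that the sets $N[v_i]$ are pairwise identical or disjoint and that every remaining vertex is touched (hence bears cost $\ge 1/2$), the total weight is at least $|V|/2$. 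You would need to replace your ``every vertex is touched'' lemma by this charging step for the argument to go through.
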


\begin{proof}
We divide the cost of inserting/deleting a pair $\{u,v\}$ into two
halves and assign them to $u$ and $v$ equally. Thereafter we count
the costs on all vertices.

For any two vertices with distance $2$, at most one of them is not
shown in a solution $S$: otherwise they would have to belong to the
same clique in $G \triangle S$ because of their common neighbors but
the edge between them is missing. Thus, if we let $\{v_1, v_2, \dots,
v_r \}$ be the vertices not shown in $S$, then each two of their
closed neighbors $\{N[v_1], N[v_2], \dots, N[v_r] \}$ are either the
same (when they are in the same simple series module) or mutually
disjoint.  The cost in each $N[v_i]$ is $\delta(v_i) +
\gamma(N[v_i])/2 = \rho(v_i)/2$, which is at least $|N[v_i]| / 2$,
because by our reduction rule, in the reduced instance we have
$\rho(v) \geq |N[v]|$ for each vertex $v$. Each of the vertices not in
any of $N[v_i]$ is contained in at least one pair of $S$ and therefore
bears cost at least $1/2$. Summing them up, we get a lower bound for
the total cost at least $|V|/2$. Thus, if the solution $S$ has a
weight bounded by $k$, then $k \geq |V| / 2$, i.e., the graph has at
most $2k$ vertices. \qed
\end{proof}

\section{On unweighted and real-weighted versions}
\label{sec:unweighted}

We now show how to adapt the algorithm in the previous section to
support unweighted and real-weighted versions.  Only slight
modifications are required. Therefore, the proof of the correctness
of them is omitted for the lack of space.

\paragraph{Unweighted version.} The kernelization algorithm presented
does not work for unweighted version.  The trouble arises in
Step~\ref{rule:pendent}, where merging $N[v]$ is not a valid
operation in an unweighted graph.  Fortunately, this can be easily
circumvented, by replacing Step~\ref{rule:pendent} by the following
new rule:

\addtocounter{kernelrule}{-1}
\begin{kernelrule}[U]\label{rule:pendent-unweighted}
%   If $v$ is a vertex satisfying $\rho(v) < |N[v]|$, and suppose after
%   we have applied Steps~\ref{rule:deficient} and
%   \ref{rule:multi-neighbor} on $N[v]$,
  Let $v$ be a vertex such that $N[v]$ is reducible on which
  Steps~\ref{rule:deficient} and \ref{rule:multi-neighbor} have been
  applied. If there still exists a vertex $x$ in $N(N[v])$, then
  replace $N[v]$ by a complete subgraph $K_{|X| - |Y|}$, and connect
  $x$ to all vertices of this subgraph.
\end{kernelrule}

The correctness of this new rule is similar to the arguments in last
section, and it is easy to check the first two rules apply for the
unweighted version.  Moreover, the proof of Lemma~\ref{lem:bound}
can be easily adapted with the new rule.

\paragraph{Real-weighted version.} There are even more troubles when
weights are allowed to be real numbers, instead of only positive
integers.  The first problem is that, without the integrality,
(\ref{eq4}) cannot imply (\ref{eq3}).  This is fixable by changing
the definition of reducible closed neighborhood from $\rho(v) <
|N(v)|$ to $\rho(v) \leq |N(v)| - 1$ (they are equivalent for
integers), then (\ref{eq4}) becomes
\begin{equation}
  |X||Y| \leq \pi(P(X, Y)) \leq \rho(v) \leq |N[v]| - 1 = |X| + |Y| - 1.
  \label{eq4-new}
\end{equation}
Formulated on reducible closed neighborhood,
Steps~\ref{rule:deficient} and \ref{rule:multi-neighbor} remain the
same.

The second problem is Step~\ref{rule:pendent}, in which we need to
maintain the validity of weights.  Recall that we demand all weights
be at least $1$ for weight functions. This, although trivially holds
for integral weight functions, will be problematic for real weight
functions.  More specifically, in Step~\ref{rule:pendent}, the edge
$[x,v']$ could be assigned a weight $c_X - c_Y < 1$ when $c_X$ and
$c_Y$ differ by less than $1$.  This can be fixed with an extension
of Step~\ref{rule:pendent}:

\addtocounter{kernelrule}{-1}
\begin{kernelrule}[R]\label{rule:pendent-real-weighted}
  Let $v$ be a vertex such that $N[v]$ is reducible and that on which
  Steps~\ref{rule:deficient} and \ref{rule:multi-neighbor} have been
  applied.  If there still
  exists a vertex $x$ in $N(N[v])$, then
  \begin{itemize}
  \item if $c_X - c_Y \geq 1$, merge $N[v]$ into a single vertex $v'$,
    connect $v'$ to $x$ with weight $c_X - c_Y$, set weight of each
    anti-edge between $v'$ and other vertex to $+\infty$, and decrease
    $k$ by $c_Y$;
\vspace{-2mm}
  \item if $c_X - c_Y < 1$, merge $N[v]$ into two vertices $v'$ and
    $v''$, connect $v'$ to $x$ with weight $2$, and $v'$ to $v''$ with
    weight $2 - (c_X - c_Y)$, set weight of each anti-edge between
    $v', v''$ to other vertex to $+\infty$, and decrease $k$ by $c_X -
    2$.
  \end{itemize}
\end{kernelrule}

The new case is just to maintain the validity of the weight, and does
not make a real difference from the original case.  However, there
does exist one subtlety we need to point out, that is, the second case
might increase $k$ slightly, and this happens when $c_X - 2 \leq 0$,
then we are actually increase $k$ by $2 - c_X$.  We do not worry about
this trouble due to both theoretical and practical reasons.
Theoretically, the definition of kernelization does not forbid
increasing $k$, and we refer readers who feel uncomfortable with this
to the monographs \cite{downey-fellows,flum-grohe,niedermeier}.
Practically, 1) it will not really enlarge or complicate the graph,
and therefore any reasonable algorithms will work as the same; 2) this
case will not happen too much, otherwise the graph should be very
similar to a star, and easy to solve; 3) even using the original value
of $k$, our kernel size is bounded by $3k$.

The proof of Lemma~\ref{lem:bound} goes almost the same, with only
the constant slightly enlarged.  Due to the relaxation of the
condition of reducible closed neighborhood from $\rho(v) < |N(v)|$
to $\rho(v) \leq |N(v)| - 1$, the number of vertices in the kernel
for real-weighted version is bounded by $2.5k$.

\section{Discussion}\label{sec:discussion}

One very interesting observation is that for the unweighted version,
by the definition of simple series modules, all of the following are
exactly the same:
\[
  N[u] = N[M],\quad
  \delta(u) = \delta(M), \quad
  \mbox{and} \quad \gamma(N[u]) = \gamma(N[M]),
\]
where $M$ is the simple series module containing vertex $u$, and
$\delta(M)$ is a natural generalization of definition $\delta(v)$.
Thus it does not matter we use the module or any vertex in it, that
is, every vertex is a full representative for the simple series
module it lies in.  Although there has been a long list of linear
algorithms for finding modular decomposition for an undirected graph
(see a comprehensive survey by de Montgolfier
\cite{de-Montgolfier-thesis}), it is very time-comsuming because the
big constant hidden behind the big-O \cite{tedder-08}, and
considering that the modular decopmosition needs to be
re-constructed after each iteration, this will be helpful. It is
somehow surprising that the previous kernelization algorithms can be
significantly simplified by avoiding modular decomposition. Being
more suprising, this enables our approach to apply for the weighted
version, because one major weakness of modular decomposition is its
inability in handling weights.

One similar problem on inconsistant information is the {\sc feedback
  vertex set on tournament (fast)} problem, which asks the reverse of
minimum number of arcs to make a tournament transtive.  Given the
striking resemblances between {\sc cluster editing} and {\sc fast},
and a series of ``one-stone-two-birds'' approximation algorithms
\cite{ailon-08-ranking-and-clustering,zuylen-09} which only take
advantage of the commonalities between them, we are strongly attempted
to compare the results of these two problems from the parameterized
aspect.

For the kernelization, our result already matches the best kernel,
$(2+\epsilon) k$ for weighted {\sc fast} of Bessy et
al. \cite{fomin-09-kernel-fast}, which is obtained based on a
complicated PTAS \cite{MathieuS07}.

For the algorithms, Alon et al. \cite{ALS09} managed to generalize
the famous color coding approach to give a subexponential FPT
algorithm for {\sc fast}.  This is the first subexponential FPT
algorithm out of bidimensionality theory, which was a systematic way
to obtain subexponential algorithms, and has been intensively
studied.  This is an exciting work, and opens a new direction for
further work.  Indeed, immediately after the appearance of
\cite{ALS09}, for unweighted version, Feige reported an improved
algorithm \cite{feige-09-faster-fast} that is far simpler and uses
pure combinatorial approach.  Recently, Karpinski and Schudy reached
the same result for weighted version
\cite{schudy-10-fast-rank-aggregation-between}.  Based on their
striking resemblances, we conjecture that there is also a
subexponential algorithm for the {\sc cluster editing} problem.

% \paragraph{Acknowledgement} We would like to thank the anonymous
% reviewers for their valuable comments and criticisms.  We also thank
% Xie Wang for careful reading of an earlier draft.

\small

\end{document}